\newcommand*\patchAmsMathEnvironmentForLineno[1]{%
  \expandafter\let\csname old#1\expandafter\endcsname\csname #1\endcsname
  \expandafter\let\csname oldend#1\expandafter\endcsname\csname end#1\endcsname
  \renewenvironment{#1}%
     {\linenomath\csname old#1\endcsname}%
     {\csname oldend#1\endcsname\endlinenomath}}%
\newcommand*\patchBothAmsMathEnvironmentsForLineno[1]{%
  \patchAmsMathEnvironmentForLineno{#1}%
  \patchAmsMathEnvironmentForLineno{#1*}}%
\newtheorem{theorem}{Theorem}[section]
\newtheorem{proposition}[theorem]{Proposition}
\newtheorem{lemma}[theorem]{Lemma}
\newcommand{\M}{\ensuremath{\mathcal{M}}}
\newcommand{\E}{\ensuremath{\mathcal{E}}}
\newcommand{\D}{\ensuremath{\mathcal{B}}}
\title{On maximum-sum matchings of bichromatic points}
\author{
	Oscar Chacón-Rivera\thanks{Universidad de Santiago de Chile (USACH), Facultad de Ciencia, Departamento de Matem\'atica y Ciencia de la Computaci\'on, Chile {\tt \{oscar.chacon,pablo.perez.l\}@usach.cl}.}
	\and
	Pablo P\'erez-Lantero\footnotemark[1]
}
\begin{document}

\maketitle

\begin{abstract}
Huemer et al.\ (Discrete Math, 2019) proved that for any two finite point sets $R$ and $B$ in the plane with $|R| = |B|$, the perfect matching that matches points of $R$ with points of $B$, and maximizes the total {\em squared} Euclidean distance of the matched pairs, has the property that all the disks induced by the matching have a nonempty common intersection.
%Later, Bereg et al.\ (J Glob Optim, 2022) proved that this property fails for the Euclidean distance for red-blue matchings, but it holds for $2n$ uncolored points.
A pair of matched points induces the disk that has the segment connecting the points as diameter.
In this note, we characterize these maximum-sum matchings for some family of continuous (semi-)metrics, focusing on both the Euclidean distance and squared Euclidean distance. Using this characterization, we give a different but simpler proof for the common intersection property proved by Huemer et al..
\end{abstract}

\section{Introduction}
Let $R$ and $B$ be two point sets in the plane with $|R| = |B|$. The points in $R$ are called \textit{red}, and those in $B$ are called \textit{blue}. A \textit{matching} $\M$ of $R \cup B$ is a partition of $R \cup B$ into $n$ pairs such that each pair consists of a red point and a blue point. A point $p \in R$ and a point $q \in B$ are matched if and only if the (unordered) pair $(p,q)$ is in the matching. For every $p,q \in \mathbb{R}^2$, we use $pq$ to denote the segment connecting $p$ and $q$, $\ell(pq)$ to denote the line through $pq$, and $\D(pq)$ to denote the disk with diameter equal to the length $\|p-q\|$ of $pq$, that is centered at the midpoint $(p + q)/2$ of $pq$. For any matching $\M$, we use $\D_{\M}$ to denote the set of the disks associated with, or induced by, the matching, that is, $\D_{\M} = \{ \D(pq):(p,q) \in \M \}$, where $\D(pq)$ is the disk induced by the edge $(p,q)$.

Given a metric, or semi-metric function $d:\mathbb{R}^2\times \mathbb{R}^2\rightarrow \mathbb{R}_{\ge 0}$, we say that a matching $\M$ is {\em max-sum} if it maximizes $\sum_{(p,q)\in\M} d(p,q)$ among all matchings of $R$ and $B$. Recall that a function is a semi-metric if it satisfies all the properties of a metric function except the triangle inequality. 

Huemer et al.~\cite{HUEMER2019} proved that if $\M$ is a max-sum matching for $d(p,q)=\| p - q \|^2$ for all $p,q\in \mathbb{R}^2$, then all the disks of $\D_{\M}$ have a point in common. That is, $d$ is the squared Euclidean distance, also called {\em quadrance}, and it is in fact a semi-metric.

For the Euclidean distance $d(p,q)=\|p-q\|$, Bereg et al.~\cite{bereg2022} proved that any max-sum matching $\M$ between $R$ and $B$ does not always satisfy that all disks in $\D_M$ have a common point, although the disks intersect pairwise. As their main result, they proved that the common intersection is always satisfied when $\M$ is a max-sum matching between $2n$ uncolored points. For an even point set in the plane, a matching is simply a partition of the set into pairs.

In this note, we consider max-sum matchings between planar colored point sets $R$ and $B$ with $|R| = |B|=n$, $n\ge 1$, for {\em any continuous} (semi-)metric $d$ of the Euclidean plane satisfying the following properties:
\begin{itemize}
    \item[P1.] $d(x,y) \ge 0$ for all $x,y \in \mathbb{R}^2$.
    \item[P2.] $d(x,x) = 0$ for all $x \in \mathbb{R}^2$.
    \item[P3.] $d(x,y) = d(y,x)$ for all $x,y \in \mathbb{R}^2$.
    \item[P4.] For any fixed points $p,q \in \mathbb{R}^2$ with $p \neq q$, and any real constant $t$, the level set $\{ z \in \mathbb{R}^2 : d(p,z) - d(q,z) = t \}$ is connected.
\end{itemize} Note that the first three properties are the standard properties of any semi-metric, whereas the fourth property is satisfied by semi-metrics such as the Euclidean distance and the squared Euclidean distance.

In Section~\ref{sec:characterization}, we characterize the max-sum matchings between {\em three} red points and {\em three} blue points in terms of the common intersection of certain sets defined by the six points. Using this characterization, in the case where $d$ is the Euclidean quadrance, we give in Section~\ref{sec:proof} an elementary proof to the main theorem by Huemer et al.~\cite{HUEMER2019}: The disks induced by a max-sum matching have a common point. 

\subsection{Related work}

Motivated by the study of geometric matchings of planar point sets, Huemer et al.~\cite{HUEMER2019} proved that for any finite bichromatic point set $R \cup B$ with $|R|=|B|$, and a max-sum matching $\M$ of $R$ and $B$ according to the squared Euclidean distance, all disks in $\D_{\M}$ have a nonempty common intersection. 
Later, Bereg et al.~\cite{bereg2022} proved that all disks in $\D_{\M}$ have a nonempty common intersection for any point set $P$ of $2n$ uncolored points in the plane, when $\M$ is a max-sum matching of $P$ with respect to the Euclidean distance.
This result has been slightly strengthened by Barabanshchikova and Polyanskii~\cite{barabanshchikova2024}, who proved that the {\em interiors} of all disks in $\D_{\M}$ have a nonempty common intersection in the case where all elements of $P$ are distinct. 

Sober\'on and Tang~\cite{soberon2020} considered this kind of geometric intersection problems in the more general context of geometric graphs: Given a geometric graph with vertex set a finite point set in the plane (where the edges are straight segments connecting points), they define the graph as a {\em Tverberg graph} if the disks induced by the edges of the graph have a nonempty common intersection. They show that for any odd planar point set there exists a Hamiltonian cycle which is a Tverberg graph, and that for any even planar point set there exists a Hamiltonian path with the same property. 
Notice that the previous mentioned results can be stated in terms of Tverberg graphs. For example, the result of Huemer et al.~\cite{HUEMER2019}: For any finite bichromatic point set $R \cup B$ with $|R|=|B|$, any max-sum matching of $R$ and $B$ according to the squared Euclidean distance is a Tverberg graph.

%generalized the problem of finding a common region between disks in terms of Tverberg graphs for finite set of points $S$ in $\mathbb{R}^d$ and proved the existence of Hamiltonian paths and Hamiltonian cycles that are Tverberg graphs whenever $S$ has either odd or even cardinality, respectively. 

Pirahmad et al.~\cite{pirahmad2022} refined and extended these results. They proved, for example, that: for any finite point set in the plane there exists a Hamiltonian cycle that is a Tverberg graph; for any even point set in $\mathbb{R}^d$ there exists a matching that is a Tverberg graph; and for any red and blue points in $\mathbb{R}^d$, $d\ge 3$, there exists a perfect red-blue matching that is a Tverberg graph. This last result generalizes the initial result of Huemer et al.~\cite{HUEMER2019} to higher dimensions.

In the same direction, Abu-Affash et al.~\cite{abu2023piercing} proved that for any finite planar point set $P$, the maximum-weight spanning tree of $P$ is a Tverberg graph. In fact, they proved that the center of the smallest enclosing circle of $P$ is contained in all the disks induced by the tree. 

Fingerhut (see Eppstein~\cite{andy}), motivated by a problem in designing communication networks (see Fingerhut et al.~\cite{FingerhutST97}), conjectured that given a set $P$ of $2n$ uncolored points in the plane, not necessarily distinct, and a max-sum matching $\{(a_i,b_i),i=1,\dots,n\}$ of $P$, there exists a point $o$ of the plane, not required to be a point of $P$ and called the center of the matching, such that
\begin{equation}\label{eqFingerhut}
    \|a_i-o\|+\|b_i-o\|~\le~ \frac{2}{\sqrt{3}}~\|a_i-b_i\| \hspace{0.3cm} \textrm{for all } i\in\{1,\ldots,n\}, ~\textrm{where } 2/\sqrt{3}\approx 1.1547.
\end{equation}
Bereg et al.~\cite{bereg2022}, by proving that for any point set $P$ of $2n$ uncolored points in the plane and a max-sum matching $\M=\{(a_i,b_i),i=1,\dots,n\}$ of $P$ we have that all disks in $\D_{\M}$ have a nonempty common intersection, obtained an approximation to this conjecture. Indeed, any point $o$ in the common intersection satisfies
\[
	\|a_i-o\|+\|b_i-o\|~\le~ \sqrt{2}~\|a_i-b_i\|, ~\textrm{where } \sqrt{2}\approx 1.4142.
\]
Recently, Barabanshchikova and Polyanskii~\cite{barabanshchikova2022} confirmed the conjecture of Fingerhut. After that, Pérez-Lantero and Seara~\cite{perez2024center} proved the bichromatic version of the main result obtained by Bereg et al.~\cite{bereg2022}: If $R$ and $B$ are two point sets in the plane with $|R|=|B|=n$, and $\M=\{(r_i,b_i)\in R\times B:i=1,2,\ldots,n\}$ is a max-sum matching for the Euclidean distance, there exists a point $o$ of the plane such that
\begin{equation}\label{eq2}
	\|r_i-o\|+\|b_i-o\|\le \sqrt{2}~\|r_i-b_i\| \hspace{0.3cm} \textrm{for all } i\in\{1,2,\ldots,n\}.
\end{equation}
These results about the conjecture of Fingerhut are related to the common intersection of convex sets induced by the pairs of the max-sum matchings, ellipses in this case. Indeed,
for two points $p,q\in\mathbb{R}^2$ and a real number $\lambda\ge 1$, let $\E_\lambda(pq)$ denote the (region bounded by) the ellipse with foci $p$ and $q$ and major axis length $\lambda\|p-q\|$. That is, $\E_\lambda(pq)=\{x\in\mathbb{R}^2:\|p-x\|+\|q-x\|\le \lambda\|p-q\|\}$.  The statement of Equation~\eqref{eqFingerhut} is equivalent to stating that the common intersection of the ellipses $\E_{2/\sqrt{3}}(a_1b_1), \E_{2/\sqrt{3}}(a_2b_2),\dots, \E_{2/\sqrt{3}}(a_nb_n)$ is nonempty. Furthermore, the statement of Equation~\eqref{eq2} is equivalent to stating that the common intersection of the ellipses $\E_{\sqrt{2}}(r_1b_1)$, $\E_{\sqrt{2}}(r_2b_2),\dots, \E_{\sqrt{2}}(r_nb_n)$ is also nonempty~\cite{barabanshchikova2022,andy,perez2024center}.

\section{Characterization of max-sum matchings of 6 points}
\label{sec:characterization}

Let $R=\{a,b,c\}$ be a set of three red points and $B = \{ a', b', c' \}$ a set of three blue points. Let $d$ be a continuous (semi-)metric function on $\mathbb{R}^2$ satisfying properties P1, P2, P3, and P4. Let $\{ (a,a'), (b,b'), (c,c') \}$ be a matching of $R$ and $B$.
%, and that $a, b$, and $c$ appear in this order counterclockwise. 
Let us define the following six sets:
\begin{align*}
 H(a,b) = & \{ x \in \mathbb{R}^2 : d(a,b') - d(b,b') \le d(a,x) - d(b,x) \}, \\
 H(b,c) = & \{ x \in \mathbb{R}^2 : d(b,c') - d(c,c') \le d(b,x) - d(c,x) \}, \\
 H(c,a) = & \{ x \in \mathbb{R}^2 : d(c,a') - d(a,a') \le d(c,x) - d(a,x) \},
\end{align*} and 
\begin{align*}
 h(a,b) = & \{ x \in \mathbb{R}^2 : d(a,x) - d(b,x) \le d(a,a') - d(b,a') \}, \\
 h(b,c) = & \{ x \in \mathbb{R}^2 : d(b,x) - d(c,x) \le d(b,b') - d(c,b') \}, \\
 h(c,a) = & \{ x \in \mathbb{R}^2 : d(c,x) - d(a,x) \le d(c,c') - d(a,c') \}.
\end{align*}
In other words, $H(a,b)$ is the set of the (blue) points $x\in\mathbb{R}^2$ such that $d(a,b')+d(b,x)\le d(a,x)+d(b,b')$. That is, $\{(a,x),(b,b')\}$ is a max-sum matching of $\{a,b\}$ and $\{x,b'\}$. Similarly, $h(a,b)$ is the set of the (blue) points $x\in\mathbb{R}^2$ such that $\{(a,a'),(b,x)\}$ is a max-sum matching of $\{a,b\}$ and $\{a',x\}$.
%
%That is, $H(p,q)$ is the set of the points to the right of the leftmost locus between $p$ and $q$, while $h(p,q)$ is the set of points to the left of the rightmost locus between $p$ and $q$. 
%
For example, if $d$ is the Euclidean distance, then the boundaries of the sets $H(a,b)$ and $h(a,b)$ are arcs of hyperbolas with foci $a$ and $b$ (see Figure~\ref{fig:H-sets-Euclidean}). Similarly, if $d$ is the {\em squared} Euclidean distance, then $H(a,b)$ and $h(a,b)$ are half-planes whose boundaries are perpendicular to the line $\ell(ab)$ through the segment $ab$ (see Figure~\ref{fig:H-sets-sqEuclidean}).

\begin{figure}[t]
 \centering
 \subfloat[]{
  \includegraphics[page=1,scale=0.8]{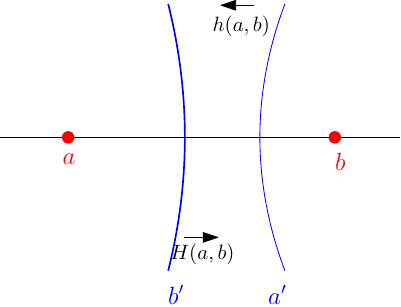}
  \label{fig:H-sets-Euclidean}
 }~~~~
 \subfloat[]{
  \includegraphics[page=2,scale=0.8]{img.pdf}
  \label{fig:H-sets-sqEuclidean}
 }
 \caption{\small{Definition of the sets $H(a,b)$ and $h(a,b)$ for (a) $\| \cdot \|$, and (b) $\| \cdot \|^2$ (the labels $b'$ and $a'$ are used to indicate that they are in the boundaries of $H(a,b)$ and $h(a,b)$, respectively).}}
 \label{fig:H-h-sets}
\end{figure}

\begin{lemma}\label{lem:1}
If the five intersections $H(a,b) \cap H(b,c) \cap H(c,a)$, $h(a,b) \cap h(b,c) \cap h(c,a)$, $H(a,b)\cap h(a,b)$, $H(b,c)\cap h(b,c)$, and $H(c,a)\cap h(c,a)$ are all nonempty, then $\{ (a,a'),(b,b'),(c,c') \}$ is a max-sum matching of $\{a,b,c\}$ and $\{a',b',c'\}$.
\end{lemma}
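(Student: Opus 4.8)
The plan is to show directly that the given matching dominates every other matching of $\{a,b,c\}$ with $\{a',b',c'\}$. There are exactly $3!=6$ matchings, indexed by the permutations of $(a',b',c')$; the given matching is the identity and has total value $S_0 = d(a,a')+d(b,b')+d(c,c')$. The remaining five permutations split into two types: the three transpositions, which swap the blue partners of two red points and fix the third, and the two $3$-cycles. For each of these five competitors I would derive the inequality $S_0 \ge S$ (where $S$ is the competitor's total) from exactly one of the five nonemptiness hypotheses, the point being that each hypothesis supplies a witness point whose contribution will cancel.

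First I would treat the transpositions using the three pairwise hypotheses. Consider, for instance, the transposition that swaps $a'$ and $b'$; beating it amounts to $d(a,b')+d(b,a')\le d(a,a')+d(b,b')$. Picking any witness $x\in H(a,b)\cap h(a,b)$ and chaining the two defining inequalities gives $d(a,b')-d(b,b')\le d(a,x)-d(b,x)\le d(a,a')-d(b,a')$; the middle $x$-dependent term drops out and leaves exactly the desired inequality. The transpositions swapping $b',c'$ and $a',c'$ follow identically from $H(b,c)\cap h(b,c)$ and $H(c,a)\cap h(c,a)$, respectively.

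Next I would treat the two $3$-cycles using the two triple hypotheses. Beating the cycle $a\mapsto b',\,b\mapsto c',\,c\mapsto a'$ amounts to $d(a,b')+d(b,c')+d(c,a')\le S_0$. Taking a witness $x\in H(a,b)\cap H(b,c)\cap H(c,a)$ and adding the three defining inequalities, the right-hand sides sum to $[d(a,x)-d(b,x)]+[d(b,x)-d(c,x)]+[d(c,x)-d(a,x)]=0$, so the left-hand sides sum to a quantity that is $\le 0$, which is precisely the required inequality. The opposite cycle $a\mapsto c',\,b\mapsto a',\,c\mapsto b'$ is handled symmetrically from $h(a,b)\cap h(b,c)\cap h(c,a)$: summing the three $h$-inequalities makes the $x$-terms telescope to $0$ on the left, leaving $0\le S_0 - [d(a,c')+d(b,a')+d(c,b')]$.

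Since the given matching is then at least as good as each of the other five, it is max-sum, completing the proof. The one subtlety worth flagging is that there is no genuine analytic obstacle: the whole argument uses only the \emph{existence} of a witness point in each intersection, and the cancellation or telescoping of the $x$-dependent terms is automatic. In particular neither continuity nor property P4 is invoked here; those hypotheses are reserved for the reverse (characterization) direction and for the application in Section~\ref{sec:proof}.
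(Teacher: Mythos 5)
Your proof is correct and follows essentially the same route as the paper: each of the five nonemptiness hypotheses supplies a witness point whose $x$-dependent terms cancel, yielding the five inequalities that show the identity matching dominates the three transpositions and the two $3$-cycles. Your explicit organization by permutation type and the remark that continuity and P4 are not needed here are accurate but do not change the argument.
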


\begin{proof}
Let $z_1 \in H(a,b) \cap H(b,c) \cap H(c,a)$. Then,
\begin{align*}
  d(a,b') - d(b,b') & \le d(a,z_1) - d(b,z_1), \\
  d(b,c') - d(c,c') & \le d(b,z_1) - d(c,z_1), \\
  d(c,a') - d(a,a') & \le d(c,z_1) - d(a,z_1).
\end{align*} 
By adding the above three equations, we obtain that
\[
	d(a,b') + d(b,c') + d(c,a') \le d(a,a') + d(b,b') + d(c,c').
\] 
Let $z_2 \in h(a,b) \cap h(b,c) \cap h(c,a)$. Then, 
 \begin{align*}
  d(a,z_2) - d(b,z_2) & \le d(a,a') - d(b,a'), \\
  d(b,z_2) - d(c,z_2) & \le d(b,b') - d(c,b'), \\
  d(c,z_2) - d(a,z_2) & \le d(c,c') - d(a,c').
 \end{align*} 
By adding the above three equations, we obtain that 
\[ 
	d(a,c') + d(b,a') + d(c,b') \le d(a,a') + d(b,b') + d(c,c'). 
\]
Let $z_3\in H(a,b)\cap h(a,b)$. By the definitions of $H(a,b)$ and $h(a,b)$, we have that
\[
	d(a,b') - d(b,b') \le d(a,z_3) - d(b,z_3) \le d(a,a') - d(b,a').
\]
Then, we obtain $d(a,b') + d(b,a') \le d(a,a') + d(b,b')$, which implies that
\[
	d(a,b') + d(b,a') + d(c,c') \le d(a,a') + d(b,b') + d(c,c').
\]
Analogously, we obtain that
\begin{align*}
	d(a,a') + d(b,c') + d(c,b') & \le d(a,a') + d(b,b') + d(c,c'), ~~\text{and}\\
	d(a,c') + d(b,b') + d(c,a') & \le d(a,a') + d(b,b') + d(c,c').
\end{align*}
Therefore, $\{(a,a'),(b,b'),(c,c')\}$ is a max-sum matching of $\{a,b,c\}$ and $\{a',b',c'\}$, since its total sum is never smaller than that of any other matching.
\end{proof}

\begin{lemma}\label{lem:2}
If $\{ (a,a'), (b,b'), (c,c') \}$ is a max-sum matching of $\{a,b,c\}$ and $\{a',b',c'\}$, then the five intersections $H(a,b) \cap H(b,c) \cap H(c,a)$, $h(a,b) \cap h(b,c) \cap h(c,a)$, $H(a,b)\cap h(a,b)$, $H(b,c)\cap h(b,c)$, and $H(c,a)\cap h(c,a)$ are all nonempty.
\end{lemma}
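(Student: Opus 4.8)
The plan is to read the six sets as super- and sublevel sets of the three \emph{difference functions}. Writing $\delta_{pq}(x)=d(p,x)-d(q,x)$, one has $H(a,b)=\{x:\delta_{ab}(x)\ge \delta_{ab}(b')\}$ and $h(a,b)=\{x:\delta_{ab}(x)\le \delta_{ab}(a')\}$, and cyclically for the pairs $(b,c)$ and $(c,a)$. The single identity driving everything is
\[
\delta_{ab}(x)+\delta_{bc}(x)+\delta_{ca}(x)=0\qquad\text{for all }x\in\mathbb{R}^2 .
\]
From the hypothesis that $\{(a,a'),(b,b'),(c,c')\}$ is max-sum I would extract the six scalar inequalities, one per competing matching. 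The three \emph{transpositions} give $\delta_{ab}(b')\le\delta_{ab}(a')$, $\delta_{bc}(c')\le\delta_{bc}(b')$, and $\delta_{ca}(a')\le\delta_{ca}(c')$, while the two $3$-cycles give $\delta_{ab}(b')+\delta_{bc}(c')+\delta_{ca}(a')\le 0$ and $\delta_{ab}(a')+\delta_{bc}(b')+\delta_{ca}(c')\ge 0$.

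The three \emph{diagonal} intersections are then immediate: the transposition inequality $\delta_{ab}(b')\le\delta_{ab}(a')$ says exactly that $a'\in H(a,b)\cap h(a,b)$, and likewise $b'\in H(b,c)\cap h(b,c)$ and $c'\in H(c,a)\cap h(c,a)$. For the triple $H(a,b)\cap H(b,c)\cap H(c,a)$ I would work on the boundary curve $L=\{x:\delta_{ab}(x)=\delta_{ab}(b')\}$ of $H(a,b)$, which is connected by P4. Using the identity together with the first $3$-cycle inequality, a point of $L$ lying in neither $H(b,c)$ nor $H(c,a)$ would satisfy $0=\delta_{ab}(b')+\delta_{bc}+\delta_{ca}<\delta_{ab}(b')+\delta_{bc}(c')+\delta_{ca}(a')\le 0$, a contradiction; hence $L\subseteq H(b,c)\cup H(c,a)$. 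Since $b'\in L\cap H(b,c)$ (again the transposition inequality), if $L$ also meets the closed set $H(c,a)$ then, $L$ being connected and covered by the two relatively closed nonempty sets $L\cap H(b,c)$ and $L\cap H(c,a)$, these must intersect, and any common point lies in all three $H$-sets. The triple of $h$-sets is handled the same way, using the second $3$-cycle inequality and the witnesses $a',b',c'$ on the respective boundary curves.

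The step I expect to be the real obstacle is the degenerate possibility that $L$ \emph{fails} to meet $H(c,a)$, i.e.\ that $L$ lies strictly on one side. Running the argument cyclically on all three boundary curves, the only way for this to happen simultaneously is the configuration in which each of the three connected level curves lies strictly inside one open side of each of the other two; a short computation then shows the three curves are pairwise disjoint and \emph{cyclically mutually separating} --- each one separates the other two. I would rule this out by a planar-topology argument: three pairwise disjoint curves in the plane cannot each separate the other two, since the induced ``betweenness'' is cyclically inconsistent (already visible for three disjoint, hence parallel, lines, where only the middle one separates). Concretely, if the curve $A$ separates $B$ from $C$, one builds a path from $A$ to $B$ staying in the closed side of $A$ that contains $B$, and this path avoids $C$, contradicting that $C$ separates $A$ from $B$. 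The one genuinely delicate ingredient is the justification, for a general continuous semimetric obeying only P1--P4, that each level set $\{\delta_{pq}=t\}$ not only is connected but actually \emph{separates} $\mathbb{R}^2$ into $\{\delta_{pq}>t\}$ and $\{\delta_{pq}<t\}$; this is transparent for the Euclidean distance (branches of hyperbolas) and for the quadrance (lines), and is what P4 must ultimately be leveraged to supply in the general case.
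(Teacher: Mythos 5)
Your proposal is correct in substance but takes a genuinely different route from the paper's proof. The two agree on the three ``diagonal'' intersections (the transposition inequalities place $a'$ in $H(a,b)\cap h(a,b)$, etc.) and both ultimately combine property P4 with the $3$-cycle inequality $\delta_{ab}(b')+\delta_{bc}(c')+\delta_{ca}(a')\le 0$. The difference lies in how the triple intersection is produced. The paper splits into two cases: if one of $a'\in H(b,c)$, $b'\in H(c,a)$, $c'\in H(a,b)$ holds, a single named point lies in all three sets; otherwise it invokes an unproven point-set topology lemma to obtain a point of $\partial H(a,b)\cap\partial H(b,c)$, and then uses the $3$-cycle inequality to show that this point lies in $H(c,a)$. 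You instead observe that the identity $\delta_{ab}+\delta_{bc}+\delta_{ca}\equiv 0$ forces the connected level curve $L_1=\{\delta_{ab}=\delta_{ab}(b')\}$ to be covered by the two closed sets $H(b,c)$ and $H(c,a)$, so that if both meet $L_1$ they must meet each other on $L_1$ --- a more elementary use of connectedness than the paper's boundary-intersection lemma. The price is your residual ``degenerate'' case, in which each level curve misses one of the other two $H$-sets; your reduction of that case to three pairwise disjoint connected level curves that mutually separate one another is correct, and that configuration is indeed impossible. Both proofs therefore rest on exactly one black-box topological fact; yours is arguably the easier one to prove rigorously.

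Two points should be tightened. First, to rule out the mutual-separation configuration you propose to ``build a path from $A$ to $B$ staying in the closed side of $A$''; P4 gives only that the level sets are connected, not path-connected, so you should argue with connected sets rather than paths: let $U$ be the connected component of $\mathbb{R}^2\setminus L_1$ containing $L_2$; then $U\cup L_1$ is connected (because $\emptyset\neq\partial U\subseteq L_1$), contains $L_1\cup L_2$, and avoids $L_3$ (which in this configuration lies in $\{\delta_{ab}>\delta_{ab}(b')\}$ while $U\subseteq\{\delta_{ab}<\delta_{ab}(b')\}$), contradicting that $L_3$ separates $L_1$ from $L_2$. Second, the worry you raise at the end is a non-issue: you never need the level set to coincide with a topological boundary or to bound two ``sides'' in any strong sense; all that is used is that $\mathbb{R}^2\setminus\{\delta_{pq}=t\}$ is the disjoint union of the two open sets $\{\delta_{pq}>t\}$ and $\{\delta_{pq}<t\}$, so any connected set avoiding the level set lies entirely in one of them. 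That is automatic from the continuity of $d$.
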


\begin{proof}
Observe that $\{ (a,a'), (b,b') \}$ must be a max-sum matching of $\{ a,b\}$ and $\{a',b' \}$. Then, we have that $d(a,b')+d(b,a')\le d(a,a')+d(b,b')$, which is equivalent to $d(a,b')-d(b,b')\le d(a,a')-d(b,a')$. This implies that $a',b' \in H(a,b) \cap h(a,b)$, which means that $H(a,b) \cap h(a,b)$ is not empty. Analogously, $H(b,c) \cap h(b,c)$ and $H(c,a) \cap h(c,a)$ are not empty.

Let us prove now that $H(a,b) \cap H(b,c) \cap H(c,a) \neq \emptyset$. The proof for $h(a,b) \cap h(b,c) \cap h(c,a)$ is analogous. We divide the proof into two cases.
 
As the first case, suppose that $a' \in H(b,c)$, $b' \in H(c,a)$, or $c' \in H(a,b)$. Assume w.l.o.g.\ that $a' \in H(b,c)$.

By the definition and properties of the sets $H(\cdot, \cdot)$, we have that $a' \in H(a,b) \cap H(c,a)$, which implies that $a' \in H(a,b) \cap H(b,c) \cap H(c,a)$, and the result thus follows.

Let $U,V \subset \mathbb{R}^2$ be two closed sets with nonempty connected boundaries. It is a known result from point-set topology that if the three sets $U \setminus V$, $U \cap V$, and $\partial V \setminus U$ are all nonempty, then there exists a point $z \in \partial U \cap \partial V$. 

That said, suppose now that $a' \notin H(b,c)$, $b' \notin H(c,a)$, and $c' \notin H(a,b)$, as the second case of the proof. We then have that $a' \in H(a,b) \setminus H(b,c)$, $b' \in H(a,b) \cap H(b,c)$, and $c' \in \partial H(b,c) \setminus H(a,b)$. Given that $H(a,b)$ and $H(b,c)$ are closed regions of the plane, due to the fact that $d$ is continuous, and also that $\partial H(a,b)$ and $\partial H(b,c)$ are connected and nonempty by the assumed properties of the (semi-)metric $d$, the topological result above implies that there exists a point $z \in H(a,b) \cap H(b,c)$ in the boundary of $H(a,b)$ and also in the boundary of $H(b,c)$. Assume by contradiction that $z \notin H(c,a)$. Then, we have that 
\begin{align*}
  d(c,z) - d(a,z) & < d(c,a') - d(a,a'), \qquad (z \notin H(c,a)) \\
  d(a,z) - d(b,z) & = d(a,b') - d(b,b'), \qquad (z \in \partial H(a,b)) \\
  d(b,z) - d(c,z) & = d(b,c') - d(c,c'). \qquad (z \in \partial H(b,c))
\end{align*}
By adding the above three equations, we obtain that $d(a,a') + d(b,b') + d(c,c') < d(a,b') + d(b,c') + d(c,a')$, which means that $\{ (a,a'), (b,b'), (c,c') \}$ is not a max-sum matching since the matching $\{ (a, b'), (b, c'), (c, a') \}$ has larger sum, which is a contradiction. Hence, we must have that $z \in H(c,a)$, which implies $z \in H(a,b) \cap H(b,c) \cap H(c,a)$. The result thus follows.
\end{proof}

By combining Lemma~\ref{lem:1} and Lemma~\ref{lem:2}, we obtain the main result of this section:

\begin{theorem}\label{theo:characterization}
The matching $\{ (a,a'), (b,b'), (c,c') \}$ is a max-sum matching of the red points $\{a,b,c\}$ and the blue points $\{a',b',c'\}$ if and only if
%\[
%	H(a,b) \cap H(b,c) \cap H(c,a) \neq \emptyset ~~\text{and}~~ h(a,b) \cap h(b,c) \cap h(c,a) \neq \emptyset.
%\]
the five intersections $H(a,b) \cap H(b,c) \cap H(c,a)$, $h(a,b) \cap h(b,c) \cap h(c,a)$, $H(a,b)\cap h(a,b)$, $H(b,c)\cap h(b,c)$, and $H(c,a)\cap h(c,a)$ are all nonempty.
\end{theorem}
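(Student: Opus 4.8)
The plan is to observe that the biconditional in the theorem decomposes exactly along the two lemmas already established, so no new machinery is needed: each of the two implications is supplied verbatim by one of the lemmas, and it remains only to chain them together.

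First I would handle the ``only if'' direction. Assuming that $\{(a,a'),(b,b'),(c,c')\}$ is a max-sum matching of $\{a,b,c\}$ and $\{a',b',c'\}$, Lemma~\ref{lem:2} yields immediately that all five intersections $H(a,b)\cap H(b,c)\cap H(c,a)$, $h(a,b)\cap h(b,c)\cap h(c,a)$, $H(a,b)\cap h(a,b)$, $H(b,c)\cap h(b,c)$, and $H(c,a)\cap h(c,a)$ are nonempty. This is precisely the conclusion required for this direction. Conversely, for the ``if'' direction I would assume that all five intersections are nonempty and invoke Lemma~\ref{lem:1}, which concludes that $\{(a,a'),(b,b'),(c,c')\}$ is a max-sum matching. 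Combining the two implications gives the stated equivalence.

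Since the two lemmas together cover both directions of the biconditional, the theorem follows with no further argument, and I expect no real obstacle at this stage. All of the genuine difficulty has already been absorbed into Lemma~\ref{lem:2}, specifically the point-set-topology step that guarantees a common boundary point $z\in\partial H(a,b)\cap\partial H(b,c)$; this is exactly where property P4 is used, since it ensures the level sets, and hence the boundaries $\partial H(\cdot,\cdot)$, are connected, which together with the continuity of $d$ makes the topological intersection lemma applicable. Had the theorem needed to be proved from scratch rather than assembled from the lemmas, that topological intersection argument (and the accompanying case split on whether $a'\in H(b,c)$, $b'\in H(c,a)$, or $c'\in H(a,b)$) would be the crux.
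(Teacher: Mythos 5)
Your proposal is correct and matches the paper exactly: the paper states Theorem~\ref{theo:characterization} as the immediate combination of Lemma~\ref{lem:1} (the ``if'' direction) and Lemma~\ref{lem:2} (the ``only if'' direction), with no further argument. Your observation that all the real work, in particular the topological boundary-intersection step relying on P4, lives inside Lemma~\ref{lem:2} is also accurate.
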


\section{Common intersection for the Euclidean quadrance}
\label{sec:proof}

Let $R$ and $B$ be two point sets in the plane, with $|R|=|B|=n\ge 2$, and let $\M$ be a max-sum matching of $R$ and $B$ with respect to the Euclidean quadrance. Huemer et al.~\cite{HUEMER2019} first proved that the disks in $\D_\M$ intersect pairwise, so the common intersection of the disks in $\D_\M$ holds for $n=2$. To give our new proof of this property for $n\ge 3$, we will use Helly's Theorem~\cite{helly1923} in $\mathbb{R}^2$, that is, we will prove the common intersection for $n=3$, by using the characterization of Section~\ref{sec:characterization}. For completeness, we state Helly's Theorem:

\begin{theorem}[Helly~\cite{helly1923}]\label{thm:helly} 
Let $\mathcal{F}$ be a finite family of closed convex sets in $\mathbb{R}^m$ such that every subfamily of $m+1$ sets of $\mathcal{F}$ has nonempty intersection. Then, all sets in $\mathcal{F}$ have nonempty intersection. 
\end{theorem}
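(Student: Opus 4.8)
The plan is to prove Helly's theorem by induction on the number $n$ of sets in $\mathcal{F}$, using Radon's theorem as the key combinatorial-geometric tool. Writing $\mathcal{F} = \{C_1, \ldots, C_n\}$, the base case is $n = m+1$, which is exactly the hypothesis. I note in passing that only convexity of the sets is used in this argument; closedness plays no role for a finite family, so I would phrase the whole argument for arbitrary convex $C_i$. Recall that Radon's theorem asserts that any $N \ge m+2$ points in $\mathbb{R}^m$ admit a partition of their index set into two blocks $I \sqcup J$ with $\operatorname{conv}\{p_i : i \in I\} \cap \operatorname{conv}\{p_j : j \in J\} \neq \emptyset$.

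First I would establish Radon's theorem, since the induction rests on it. Given points $p_1, \ldots, p_N \in \mathbb{R}^m$ with $N \ge m+2$, the $N$ vectors $(p_i, 1) \in \mathbb{R}^{m+1}$ are linearly dependent, so there exist scalars $\lambda_1, \ldots, \lambda_N$, not all zero, with $\sum_i \lambda_i p_i = 0$ and $\sum_i \lambda_i = 0$. Setting $I = \{i : \lambda_i > 0\}$ and $J = \{i : \lambda_i \le 0\}$, and letting $s = \sum_{i \in I} \lambda_i = -\sum_{j \in J} \lambda_j > 0$, the point $x = \sum_{i \in I} (\lambda_i/s)\, p_i = \sum_{j \in J} (-\lambda_j/s)\, p_j$ is a convex combination on each side, hence lies in both convex hulls.

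Then I would run the induction for $n \ge m+2$. Assuming the theorem for $n-1$ sets, I observe that every subfamily of $\mathcal{F}$ inherits the ``every $m+1$ intersect'' hypothesis, since any $m+1$ members of a subfamily are also $m+1$ members of $\mathcal{F}$. Hence for each index $i$ the family $\{C_j : j \neq i\}$ of $n-1 \ge m+1$ sets has, by the induction hypothesis, a common point $p_i \in \bigcap_{j \neq i} C_j$. Applying Radon's theorem to the $n \ge m+2$ points $p_1, \ldots, p_n$ yields a partition $\{1, \ldots, n\} = I \sqcup J$ and a point $x \in \operatorname{conv}\{p_i : i \in I\} \cap \operatorname{conv}\{p_j : j \in J\}$. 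I then claim $x \in \bigcap_{k=1}^n C_k$: fix any $k$ and assume without loss of generality that $k \in J$; for every $i \in I$ we have $i \neq k$, so $p_i \in \bigcap_{j \neq i} C_j \subseteq C_k$, and by convexity $\operatorname{conv}\{p_i : i \in I\} \subseteq C_k$, whence $x \in C_k$. Since $k$ was arbitrary, all sets in $\mathcal{F}$ share the point $x$.

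The main obstacle is the index bookkeeping in the final claim: one must see that each set $C_k$ contains every point $p_i$ lying on the Radon block \emph{opposite} to $k$, which is exactly what the defining property $p_i \in \bigcap_{j \neq i} C_j$ guarantees as soon as $i \neq k$. The only other point requiring care is that the inductive step is valid only for $n \ge m+2$, so that $n-1 \ge m+1$ and the induction hypothesis applies; this is precisely why $n = m+1$ must be isolated as the base case and settled directly from the assumption.
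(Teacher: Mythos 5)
Your proof is correct, but there is nothing in the paper to compare it against: the paper states Helly's theorem only for completeness, citing Helly's 1923 paper, and uses it as a black box (in fact only in the case $m=2$, $n$ arbitrary, applied to the disks $\D_\M$ after the three-disk case is established). What you have supplied is the standard Radon-based induction, and it is complete: the lifting of $p_i$ to $(p_i,1)\in\mathbb{R}^{m+1}$ correctly produces the affine dependence, the sign split gives the Radon partition, and the key bookkeeping step --- that $p_i\in\bigcap_{j\neq i}C_j\subseteq C_k$ whenever $i\neq k$, so the convex hull of the Radon block \emph{opposite} $k$ lies in $C_k$ --- is exactly right, as is your observation that closedness is irrelevant for finite families and that the base case $n=m+1$ must be isolated so that $n-1\ge m+1$ in the inductive step. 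Two points you leave implicit and could state: first, the block $I=\{i:\lambda_i>0\}$ is nonempty (hence $s>0$) because $\sum_i\lambda_i=0$ with not all $\lambda_i$ zero forces at least one strictly positive and one strictly negative coefficient, so both blocks of the partition genuinely carry points; second, the theorem as stated tacitly assumes $|\mathcal{F}|\ge m+1$, since for smaller families the hypothesis is vacuous while the conclusion can fail --- your choice of base case handles precisely the regime in which the paper invokes the theorem.
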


Let $\M=\{(a,a'),(b,b'),(c,c')\}$ be a max-sum matching of $R=\{a,b,c\}$ and $B=\{a',b',c'\}$, and let us define the three sets $S(a,b)=H(a,b)\cap h(a,b)$, $S(b,c)=H(b,c)\cap h(b,c)$, and $S(c,a)=H(c,a)\cap h(c,a)$. Given that $S(a,b)$, $S(b,c)$, and $S(c,a)$ are nonempty strips perpendicular to the lines  $\ell(ab)$, $\ell(bc)$, and $\ell(ca)$, respectively, Theorem~\ref{theo:characterization} asserts that $S(a,b) \cap S(b,c) \cap S(c,a) \neq \emptyset$ (see Figure~\ref{fig:strips}). So, let $z$ be a point of $S(a,b) \cap S(b,c) \cap S(c,a)$.

For two distinct points $p,q\in\mathbb{R}^2$, let $\tau(pq)$ be the ray with apex $p$ that goes through $q$. For three distinct points $p,q,r\in\mathbb{R}^2$, let $\omega(pqr)$ denote the convex wedge bounded by $\tau(pq)$ and $\tau(pr)$ and $\Delta pqr$ denote the triangle with vertices $p$, $q$, and $r$.

\begin{proposition}\label{prop:1}
Let $z^*$ be the orthogonal projection of the point $z$ into the line $\ell(ab)$. The following statements are satisfied:
(i) If $z^*\in ab$, then $z^*\in\D(aa')\cap \D(bb')$.
(ii) If $z^*\in \tau(ab)\setminus ab$, then $z^*, b\in \D(aa')$.
(iii) If $z^*\in \tau(ba)\setminus ab$, then $z^*, a\in \D(bb')$.
\end{proposition}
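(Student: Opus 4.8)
The plan is to reduce the whole statement to the sign of a product of two reals by coordinatizing $\ell(ab)$ and using that, for the Euclidean quadrance, $x\mapsto d(a,x)-d(b,x)$ is affine with gradient along $\ell(ab)$. First I would place $\ell(ab)$ on the $x$-axis with $a=(0,0)$ and $b=(L,0)$, where $L=\|a-b\|>0$, and write a generic point as $(x_1,x_2)$. Expanding gives $d(a,x)-d(b,x)=\|a-x\|^2-\|b-x\|^2=2Lx_1-L^2$, which depends only on $x_1$, hence only on the orthogonal projection of $x$ onto $\ell(ab)$. In particular $d(a,z)-d(b,z)=d(a,z^*)-d(b,z^*)$, so $z\in S(a,b)$ if and only if $z^*\in S(a,b)$.

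Next I would make the strip $S(a,b)=H(a,b)\cap h(a,b)$ explicit in this coordinate. Writing $\alpha,\beta,\zeta$ for the first coordinates of $a',b',z^*$ (so $z^*=(\zeta,0)$), the evaluations $d(a,a')-d(b,a')=2L\alpha-L^2$ and $d(a,b')-d(b,b')=2L\beta-L^2$ turn the two defining inequalities of $S(a,b)$ into $\beta\le\zeta$ (membership in $H(a,b)$) and $\zeta\le\alpha$ (membership in $h(a,b)$). This yields the single chain
\[
  \beta\ \le\ \zeta\ \le\ \alpha ,
\]
which is the heart of the argument: strip membership forces $z^*$ to lie, along $\ell(ab)$, between the projections of $b'$ and $a'$.

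Then I would conclude by a sign check based on the Thales criterion $x\in\D(pq)\iff\langle x-p,x-q\rangle\le 0$. Since $z^*=(\zeta,0)$, $a=(0,0)$, $b=(L,0)$, the relevant products collapse to $\langle z^*-a,z^*-a'\rangle=\zeta(\zeta-\alpha)$, $\langle z^*-b,z^*-b'\rangle=(\zeta-L)(\zeta-\beta)$, $\langle b-a,b-a'\rangle=L(L-\alpha)$, and $\langle a-b,a-b'\rangle=L\beta$. In case (i) the hypothesis $z^*\in ab$ gives $0\le\zeta\le L$, which with $\beta\le\zeta\le\alpha$ makes the first two products nonpositive, so $z^*\in\D(aa')\cap\D(bb')$. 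In case (ii), $z^*\in\tau(ab)\setminus ab$ gives $\zeta>L$, hence $\alpha\ge\zeta>L$, placing both $z^*$ and $b$ in $\D(aa')$. In case (iii), $z^*\in\tau(ba)\setminus ab$ gives $\zeta<0$, hence $\beta\le\zeta<0$, placing both $z^*$ and $a$ in $\D(bb')$. These cases are exhaustive, corresponding to $\zeta\in[0,L]$, $\zeta>L$, and $\zeta<0$.

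I do not expect a genuine obstacle: once the reduction $\beta\le\zeta\le\alpha$ is in hand, only signs of products of reals remain. The one point requiring care is orientation bookkeeping — fixing the convention $a\mapsto 0$, $b\mapsto L$ so that $H(a,b)$ becomes $\{x_1\ge\beta\}$ and $h(a,b)$ becomes $\{x_1\le\alpha\}$ rather than the reverse, and matching each of the three positional cases to the correct disk.
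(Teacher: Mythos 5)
Your proof is correct and follows essentially the same route as the paper: the key step in both is that membership in the strip $S(a,b)$ forces $z^*$ to lie on $\ell(ab)$ between the orthogonal projections of $b'$ and $a'$ (your chain $\beta\le\zeta\le\alpha$ is the paper's ``$z^*$ belongs to the segment $b^*a^*$''), after which the conclusion is the Thales criterion for the diametral disks. The only difference is presentational: you verify Thales by the explicit inner-product computation $\langle x-p,x-q\rangle\le 0$ in coordinates, where the paper invokes it synthetically via the chords $aa^*$ and $bb^*$.
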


\begin{proof}
Let $a^*$ and $b^*$ be the orthogonal projections of $a'$ and $b'$ into $\ell(ab)$, respectively. Since $z,z^*\in H(a,b)\cap h(a,b)$, $a'\in \partial h(a,b)$, and  $b'\in \partial H(a,b)$, we have that $z^*$ belongs to the segment $b^*a^*$. Note that if $z^*\in aa^*$, then $z^*\in\D(aa')$ by Thales' Theorem. Similarly, if $z^*\in bb^*$, then $z^*\in\D(bb')$. Hence: 
(i) If $z^*\in ab$, then $z^*\in aa^*\cap bb^*$, which implies $z^*\in\D(aa')\cap \D(bb')$ (see Figure~\ref{fig:prop1}).
(ii) If $z^*\in \tau(ab)\setminus ab$, then $z^*, b\in aa^*$, which implies $z^*, b\in \D(aa')$ (see Figure~\ref{fig:prop2}).
(iii) If $z^*\in \tau(ba)\setminus ab$, then $z^*, a\in bb^*$, which implies $z^*, a\in \D(bb')$.
\end{proof}

\begin{figure}[t]
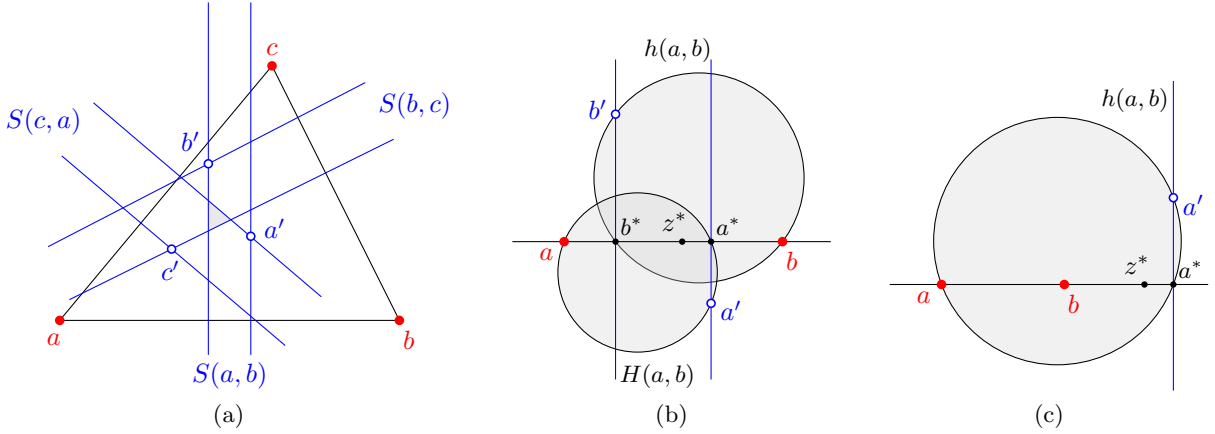

 \centering
 \subfloat[]{\label{fig:strips}\includegraphics[page=9,scale=1]{img.pdf}}~~~~~
 \subfloat[]{\label{fig:prop1}\includegraphics[page=10,scale=1]{img.pdf}}~~~~~
 \subfloat[]{\label{fig:prop2}\includegraphics[page=11,scale=1]{img.pdf}}
 \caption{\small{
 		(a) Strips $S(a,b)$, $S(b,c)$, and $S(c,a)$ with their common intersection.
		(b-c) Proof of Proposition~\ref{prop:1}.
	}}
 \label{fig:prop}
\end{figure}

We prove first the extreme case in which $a$, $b$, and $c$ are collinear points. Assume w.l.o.g.\ that $b$ belongs to the segment $ac$. Since $S(a,b)\cap S(b,c)\cap S(c,a)\neq \emptyset$, this intersection must be a line through $b'$ perpendicular to $\ell(ac)$. So, let $z^*$ be the intersection of this line with $\ell(ac)$, and note that $z^*\in\D(bb')$.
If $z^*\in ac$, then $z^*\in \D(aa')\cap \D(cc')$ by Proposition~\ref{prop:1}~(i). This implies that $z^*\in \D(aa')\cap \D(bb') \cap\D(cc')$.
If $z^*\in \tau(ac)\setminus ac$, then $c\in\D(bb')$ since $c\in bz^*$, and $c\in\D(aa')$ by Proposition~\ref{prop:1}~(ii). Hence, $c\in \D(aa')\cap \D(bb') \cap\D(cc')$.
Similarly, if $z^*\in \tau(ca)\setminus ac$, then $a\in\D(bb')$ since $a\in bz^*$, and $a\in\D(cc')$ by Proposition~\ref{prop:1}~(iii). Hence, $a\in \D(aa')\cap \D(bb') \cap\D(cc')$.
In all the cases, we have that $\D(aa')\cap \D(bb') \cap\D(cc')\neq\emptyset$.

From this point forward, assume that $a$, $b$, and $c$ are not collinear points.

\begin{proposition}\label{prop:2}
If the point $z$ belongs to the wedge $\omega(abc)$ (resp.\ $\omega(bca)$, $\omega(cab)$), then $z$ is contained in $\D(aa')$ (resp.\ $\D(bb')$, $\D(cc')$).
\end{proposition}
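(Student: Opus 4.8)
The plan is to reduce the whole statement to a single inner-product inequality, exploiting that for the quadrance the disk $\D(aa')$ has the clean algebraic description
\[
 z \in \D(aa') \iff (z-a)\cdot(z-a') \le 0,
\]
which is just Thales' theorem (the angle subtended by the diameter $aa'$ at $z$ is at least a right angle; equivalently $d(z,a)+d(z,a')\le d(a,a')$). So the entire task becomes showing that $(z-a)\cdot(z-a')\le 0$ whenever $z$ lies in the wedge $\omega(abc)$; the two companion cases $\omega(bca)\Rightarrow z\in\D(bb')$ and $\omega(cab)\Rightarrow z\in\D(cc')$ then follow by relabeling.

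First I would extract two affine inequalities from the strip conditions at the vertex $a$. Since $z\in S(a,b)=H(a,b)\cap h(a,b)\subseteq h(a,b)$, the defining inequality of $h(a,b)$ is $d(a,z)-d(b,z)\le d(a,a')-d(b,a')$; expanding with $d(p,q)=\|p-q\|^2$ cancels the quadratic terms $\|z\|^2$, and the inequality collapses to $(b-a)\cdot(z-a')\le 0$. The reason this works is that the boundary of $h(a,b)$ is the line through $a'$ perpendicular to $\ell(ab)$, so $a'$ is exactly the right reference point and the condition comes out homogeneous in $z-a'$. In the same way, from $z\in S(c,a)\subseteq H(c,a)$ — whose boundary is the line through $a'$ perpendicular to $\ell(ca)$ — I obtain $(c-a)\cdot(z-a')\le 0$.

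Next I would invoke the wedge hypothesis. Because $a,b,c$ are not collinear, the convex wedge $\omega(abc)$ is precisely the cone $\{\,a+s(b-a)+t(c-a):s,t\ge 0\,\}$, so $z-a=s(b-a)+t(c-a)$ for some $s,t\ge 0$. Then
\[
 (z-a)\cdot(z-a') = s\,(b-a)\cdot(z-a') + t\,(c-a)\cdot(z-a') \le 0,
\]
since each summand is a nonnegative scalar times a nonpositive factor from the previous step. By the Thales reformulation this is exactly $z\in\D(aa')$, as desired.

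The step I expect to require the most care is the bookkeeping of signs in the reduction: one must pair edge $ab$ with the half-plane $h(a,b)$ and edge $ca$ with $H(c,a)$ — the two half-planes whose defining inequalities reference $a'$ (rather than $b'$ or $c'$) — so that both reduce to an inequality of the form $(\text{edge direction out of }a)\cdot(z-a')\le 0$, with the orientation matching the nonnegative cone coefficients $s,t$. Once this pairing and the signs are fixed correctly, the final combination is immediate; the only genuine ingredients are Thales' theorem and the cancellation of the quadratic part of the quadrance.
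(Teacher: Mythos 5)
Your proof is correct, and it reaches the conclusion by a genuinely different route from the paper's. The paper argues synthetically: for the quadrance, $\partial h(a,b)$ and $\partial H(c,a)$ are the lines through $a'$ perpendicular to $\tau(ab)$ and $\tau(ac)$; letting $u$ and $v$ be the feet of these perpendiculars on the two rays, the paper observes that $z$, lying in $\omega(abc)\cap h(a,b)\cap H(c,a)$, must fall inside one of the two right triangles $\Delta aa'u$ and $\Delta aa'v$, each of which is inscribed in $\D(aa')$ by Thales. You replace this picture with a short computation: since the quadratic terms of the quadrance cancel, $z\in h(a,b)$ and $z\in H(c,a)$ become the linear inequalities $(b-a)\cdot(z-a')\le 0$ and $(c-a)\cdot(z-a')\le 0$, and writing $z-a=s(b-a)+t(c-a)$ with $s,t\ge 0$ yields $(z-a)\cdot(z-a')\le 0$, which is exactly membership in $\D(aa')$. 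Note that you use precisely the same two half-planes as the paper (the two whose defining inequalities reference $a'$), so the sign bookkeeping you flag as the delicate point coincides with the paper's choice and works out. What your version buys is that it bypasses the one step of the paper's proof that leans on the figure, namely that $\omega(abc)\cap h(a,b)\cap H(c,a)$ is covered by the two right triangles; the cost is that the argument is now specific to the algebraic form of the quadrance rather than to the geometric fact that the two boundary lines pass through $a'$ perpendicularly to the rays, but since the whole of Section~\ref{sec:proof} is already specialized to the quadrance, nothing is lost.
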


\begin{proof}
We prove the case $z\in\omega(abc)$. The proofs for the other cases are analogous.
Given that $z\in h(a,b)\cap H(c,a)$ and that $z\in\omega(abc)$, the lines $\partial h(a,b)$ and $\partial H(c,a)$ cut perpendicularly the rays $\tau(ab)$ and $\tau(ac)$, respectively. Let $u=\tau(ab)\cap \partial h(a,b)$ and $v=\tau(ac)\cap \partial H(c,a)$, which are the orthogonal projections of $a'$ into $\tau(ab)$ and $\tau(bc)$, respectively (see Figure~\ref{fig:prop3}). Furthermore, $z$ is in one of the right triangles $\Delta aa'u$ and $\Delta aa'v$, both contained in $\D(aa')$. Hence, $z\in\D(aa')$.
\end{proof} 

\begin{figure}[t]
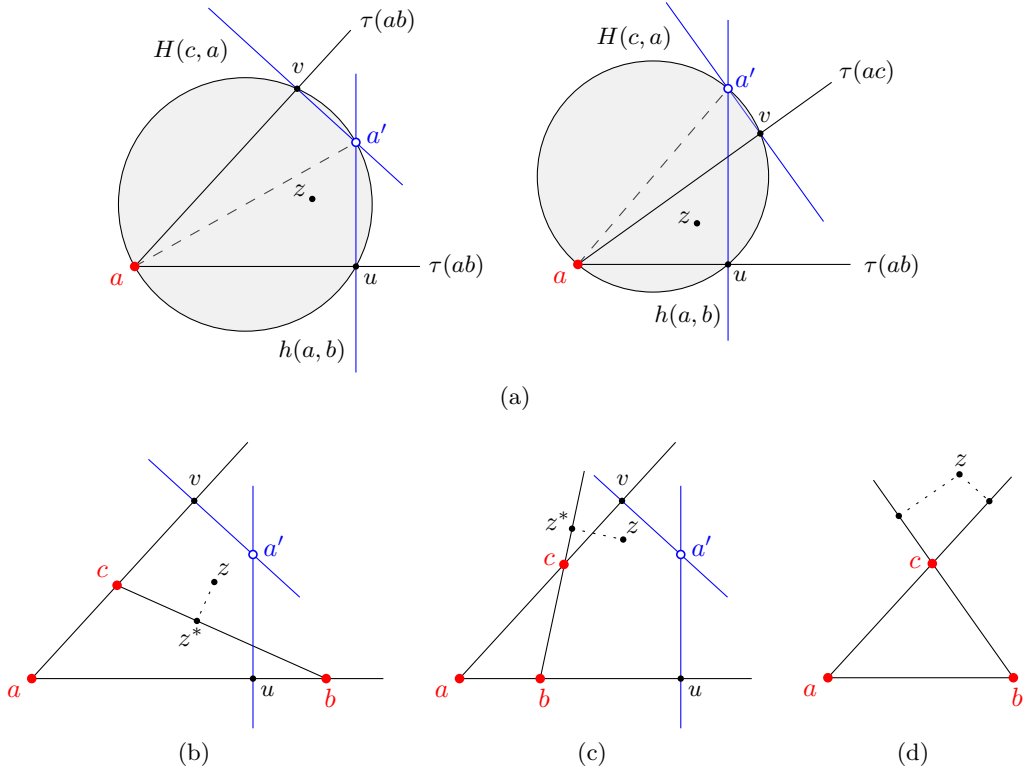

 \centering
 \subfloat[]{\label{fig:prop3}\includegraphics[page=12,scale=1]{img.pdf}}\\
 \subfloat[]{\label{fig:prop4}\includegraphics[page=13,scale=1]{img.pdf}}~~~~
 \subfloat[]{\label{fig:prop5}\includegraphics[page=14,scale=1]{img.pdf}}~~~~
 \subfloat[]{\label{fig:prop6}\includegraphics[page=15,scale=1]{img.pdf}}
 \caption{\small{
		(a) Proof of Proposition~\ref{prop:2}.
		(b-c) Proof of {\bf Case 2}.
		(d) Proof of {\bf Case 3}.
	}}
 \label{fig:prop-cases}
\end{figure}

The proof is now divided into three cases.

{\bf Case 1:} The point $z$ is contained in $\Delta abc$. We then have that $z\in\D(aa')\cap\D(bb')\cap\D(cc')$ by applying Proposition~\ref{prop:2} three times.

{\bf Case 2:} The point $z$ is contained in a wedge among $\omega(abc)$, $\omega(bca)$, and $\omega(cab)$, but not in $\Delta abc$. Assume w.l.o.g.\ that $z\in\omega(abc)$. Let $z^*$ be the orthogonal projection of $z$ into $\ell(bc)$, and let $u$ and $v$ be the orthogonal projections of $a'$ into $\tau(ab)$ and $\tau(ac)$, respectively.
If $z^*\in bc$, then $z^*\in\D(bb')\cap\D(cc')$ by Proposition~\ref{prop:1}. We also have that $z^*\in\D(aa')$ by using Proposition~\ref{prop:2}. In fact, $z^*$ belongs to the convex quadrilateral of vertices $a$, $u$, $a'$, and $v$ (see Figure~\ref{fig:prop4}). Hence, we have that $z^*\in\D(aa')\cap\D(bb')\cap\D(cc')$.
Otherwise, if $z^*\notin bc$, say $z^*\in\tau(bc)\setminus bc$ w.l.o.g., then $c\in av$ which implies $c\in\D(aa')$ (see Figure~\ref{fig:prop5}). Furthermore, $c\in\D(bb')$ by Proposition~\ref{prop:1}. Hence, $c\in\D(aa')\cap\D(bb')\cap\D(cc')$.

{\bf Case 3:}  The point $z$ is not contained in any of the wedges $\omega(abc)$, $\omega(bca)$, and $\omega(cab)$. Assume w.l.o.g.\ that $z$ is such that $c\in\Delta abz$ (see Figure~\ref{fig:prop6}). The orthogonal projection of $z$ into $\ell(ac)$ belongs to $\tau(ac)\setminus ac$, so $c\in\D(aa')$ by Proposition~\ref{prop:1}. Similarly, the orthogonal projection of $z$ into $\ell(bc)$ belongs to $\tau(bc)\setminus bc$, so $c\in\D(bb')$ by Proposition~\ref{prop:1}. Hence, $c\in\D(aa')\cap\D(bb')\cap\D(cc')$.

In all of the cases, we can find a point in the intersection  $\D(aa')\cap\D(bb')\cap\D(cc')$.

\begin{theorem}[Huemer et al.~\cite{HUEMER2019}]
Let $|R|=|B|$. Any max-sum matching $\M$ of $R$ and $B$ with respect to $\|.\|^2$ is such that all the disks of $\D_\M$ have a common intersection.
\end{theorem}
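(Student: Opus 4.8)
The plan is to reduce the general statement to the case $n=3$ and then invoke Helly's Theorem. The entire preceding development in Section~\ref{sec:proof} has already established, for a max-sum matching $\{(a,a'),(b,b'),(c,c')\}$ of three red and three blue points, that the three disks $\D(aa')$, $\D(bb')$, $\D(cc')$ share a common point. I would therefore begin the proof by recalling this fact together with the pairwise intersection property for $n=2$ observed by Huemer et al.~\cite{HUEMER2019}.

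Next I would set up Helly's framework. Let $\M=\{(r_i,b_i):i=1,\dots,n\}$ be a max-sum matching of $R$ and $B$, and consider the family $\mathcal{F}=\D_\M=\{\D(r_ib_i):i=1,\dots,n\}$. Each element is a closed disk, hence a closed convex set in $\mathbb{R}^2$, so Theorem~\ref{thm:helly} applies with $m=2$: it suffices to show that every subfamily of three disks has a common point. I would fix an arbitrary triple of indices, say $\{i,j,k\}$, and argue the crucial point that the restriction of $\M$ to these three pairs is itself a max-sum matching of the six points $\{r_i,r_j,r_k\}$ and $\{b_i,b_j,b_k\}$. This follows by an exchange argument: if some matching of these six points had strictly larger total quadrance, then swapping it into $\M$ (leaving the other $n-3$ pairs untouched) would produce a matching of $R$ and $B$ with strictly larger total sum, contradicting that $\M$ is max-sum. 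Once this is established, the three-point analysis of this section applies verbatim to the triple, giving $\D(r_ib_i)\cap\D(r_jb_j)\cap\D(r_kb_k)\neq\emptyset$.

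Having verified the three-set intersection hypothesis for every triple, Helly's Theorem immediately yields that all disks in $\D_\M$ have a nonempty common intersection, which is the desired conclusion. I would also remark that for $n=1$ the statement is trivial and for $n=2$ it is the pairwise intersection already noted, so the Helly argument handles precisely the $n\ge 3$ regime.

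The main obstacle, and the only step requiring genuine care, is the exchange argument showing that the restriction of a global max-sum matching to any three pairs is again max-sum on those six points. The subtlety is purely bookkeeping: one must be sure that replacing three pairs of $\M$ by a better matching on the same six endpoints produces a valid red-blue matching of the full sets $R$ and $B$, which it does because the six points involved are exactly the endpoints of the three removed pairs and the remaining $n-3$ pairs are disjoint from them. Everything else is a direct appeal to results already proved earlier in the excerpt.
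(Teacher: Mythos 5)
Your proposal is correct and follows essentially the same route as the paper: Section~\ref{sec:proof} establishes the common intersection for an arbitrary max-sum matching of three red and three blue points, and the paper likewise concludes via Helly's Theorem (Theorem~\ref{thm:helly}) with $m=2$, relying on the (standard, and correctly justified in your exchange argument) fact that the restriction of a max-sum matching to any three pairs is again max-sum on those six points. No gaps.
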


{\small

\noindent{\bf Funding}: {\em Pablo P\'erez-Lantero}, Research supported by project DICYT 042332PL Vicerrector\'ia de Investigaci\'on, Desarrollo e Innovaci\'on USACH (Chile).

\noindent{\bf Data and materials availability statement}: Data and materials sharing not applicable to this article as no datasets were generated or
analyzed during the current study.

\bibliographystyle{abbrv}
\bibliography{main}

}

\end{document}